\documentclass{article}
\usepackage{spconf,graphicx}
\usepackage{amsfonts,amsmath,amsthm,amssymb}
\usepackage{mathrsfs}
\usepackage{dsfont}
\usepackage{bm}
\usepackage{appendix}

\usepackage{textcomp}
\usepackage{xcolor}
\usepackage{dsfont}
\usepackage{algorithm,algpseudocode}
\usepackage{algpseudocode}
\usepackage{pgfplots}
\pgfplotsset{compat=newest}
\usepackage{subfigure}
\usepackage{float}
\usepackage{xparse}
\usepackage[bookmarks=false, hidelinks]{hyperref}
\usepackage{cite}
\usepackage{tikz}
\usetikzlibrary{arrows,decorations.pathreplacing,patterns}

\newtheorem{lem}{Lemma}	

\theoremstyle{definition}

\allowdisplaybreaks 
\usepackage{balance} 

\newtheorem{theorem}{Theorem} 
\newtheorem{remark}{Remark}

\definecolor{darkcerulean}{rgb}{0.03, 0.27, 0.49}
\definecolor{chestnut}{rgb}{0.8, 0.36, 0.36}
\definecolor{airforceblue}{rgb}{0.36, 0.54, 0.66}
\definecolor{cadmiumorange}{rgb}{0.93, 0.53, 0.18}
\definecolor{bleudefrance}{rgb}{0.19, 0.55, 0.91}
\definecolor{carolinablue}{rgb}{0.6, 0.73, 0.89}
\definecolor{blue(ncs)}{rgb}{0.0, 0.53, 0.74}
\definecolor{brown(web)}{rgb}{0.65, 0.16, 0.16}
\definecolor{pearl}{rgb}{0.94, 0.92, 0.84}
\definecolor{burntumber}{rgb}{0.54, 0.2, 0.14}
\definecolor{asparagus}{rgb}{0.53, 0.66, 0.42}
\definecolor{cssgreen}{rgb}{0.0, 0.5, 0.0}
\definecolor{cadmiumgreen}{rgb}{0.0, 0.42, 0.24}
\definecolor{cadmiumorange}{rgb}{0.93, 0.53, 0.18}
\definecolor{amaranth}{rgb}{0.9, 0.17, 0.31}
\definecolor{bluegray}{rgb}{0.4, 0.6, 0.8}
\definecolor{cadmiumgreen}{rgb}{0.0, 0.42, 0.24}
\definecolor{amaranth}{rgb}{0.9, 0.17, 0.31}
\definecolor{amethyst}{rgb}{0.6, 0.4, 0.8}
\definecolor{amber}{rgb}{1.0, 0.75, 0.0}
\definecolor{azure}{rgb}{0.0, 0.5, 1.0}
\definecolor{babyblue}{rgb}{0.54, 0.81, 0.94}
\definecolor{bazaar}{rgb}{0.6, 0.47, 0.48}
\definecolor{celestialblue}{rgb}{0.29, 0.59, 0.82}
\definecolor{darklavender}{rgb}{0.45, 0.31, 0.59}
\definecolor{bluebell}{rgb}{0.64, 0.64, 0.82}
\definecolor{chamoisee}{rgb}{0.63, 0.47, 0.35}
\definecolor{darkcerulean}{rgb}{0.03, 0.27, 0.49}
\definecolor{iris}{rgb}{0.35, 0.31, 0.81}
\definecolor{dodgerblue}{rgb}{0.12, 0.56, 1.0}
\definecolor{celestialblue}{rgb}{0.29, 0.59, 0.82}
\definecolor{jazzberryjam}{rgb}{0.50, 0.62, 0.37}
\definecolor{cadetgrey}{rgb}{0.57, 0.64, 0.69}

\definecolor{burntsienna}{rgb}{0.81, 0.40, 0.26}
\definecolor{burntumber}{rgb}{0.54, 0.2, 0.14}
\definecolor{bulgarianrose}{rgb}{0.28, 0.02, 0.03}
\definecolor{burgundy}{rgb}{0.5, 0.0, 0.13}
\definecolor{cordovan}{rgb}{0.54, 0.25, 0.27}
\definecolor{eggplant}{rgb}{0.38, 0.25, 0.32}

\definecolor{brickred}{rgb}{0.8, 0.25, 0.33}
\definecolor{chestnut}{rgb}{0.8, 0.36, 0.36}
\definecolor{airforceblue}{rgb}{0.36, 0.54, 0.66}
\definecolor{cadmiumorange}{rgb}{0.93, 0.53, 0.18}
\definecolor{bleudefrance}{rgb}{0.19, 0.55, 0.91}
\definecolor{carolinablue}{rgb}{0.6, 0.73, 0.89}
\definecolor{blue(ncs)}{rgb}{0.0, 0.53, 0.74}
\definecolor{dodgerblue}{rgb}{0.12, 0.56, 1.0}
\definecolor{cssgreen}{rgb}{0.0, 0.5, 0.0}
\definecolor{cadmiumgreen}{rgb}{0.0, 0.42, 0.24}
\definecolor{cadmiumorange}{rgb}{0.93, 0.53, 0.18}
\definecolor{amaranth}{rgb}{0.9, 0.17, 0.31}
\definecolor{bluegray}{rgb}{0.4, 0.6, 0.8}
\definecolor{cadmiumgreen}{rgb}{0.0, 0.42, 0.24}
\definecolor{amber}{rgb}{1.0, 0.75, 0.0}
\definecolor{azure}{rgb}{0.0, 0.5, 1.0}
\definecolor{babyblue}{rgb}{0.54, 0.81, 0.94}
\definecolor{bazaar}{rgb}{0.6, 0.47, 0.48}
\definecolor{celestialblue}{rgb}{0.29, 0.59, 0.82}
\definecolor{darklavender}{rgb}{0.45, 0.31, 0.59}
\definecolor{bluebell}{rgb}{0.64, 0.64, 0.82}
\definecolor{chamoisee}{rgb}{0.63, 0.47, 0.35}
\definecolor{darkcerulean}{rgb}{0.03, 0.27, 0.49}
\definecolor{iris}{rgb}{0.35, 0.31, 0.81}
\definecolor{jazzberryjam}{rgb}{0.65, 0.04, 0.37}
\definecolor{charcoal}{rgb}{0.21, 0.27, 0.31}

\begin{document}

\title{ Optimal QAM Constellation for Over-the-Air Computation \\ in the Presence of Heavy-Tailed Channel Noise
}

\name{Saeed Razavikia$^{\dagger~\star}$,  Deniz Gündüz$^\star$, Carlo Fischione$^\dagger$ \thanks{Emails: \{sraz, carlofi\}@kth.se,
d.gunduz@imperial.ac.uk \\S. Razavikia was supported by the Wallenberg AI, Autonomous
Systems and Software Program.}}
\address{$^\dagger$ KTH Royal Institute of Technology, Stockholm, Sweden\\
$^\star$ Imperial College London, London, UK}

\maketitle

\begin{abstract}
Over-the-air computation (OAC) enables low-latency aggregation over multiple-access channels (MACs) by exploiting the superposition property of the wireless medium to compute functions efficiently in distributed networks. A critical but often overlooked challenge is that electromagnetic interference in practical radio channels frequently exhibits heavy-tailed behavior, causing strong impulsive noise that severely degrades computation performance. This work studies digital OAC with QAM-based signaling under heavy-tailed interference modeled by a Cauchy distribution (lacking a finite second moment). We seek QAM-like constellations that minimize the mean-squared error (MSE) of sum aggregation subject to an average-power constraint. The problem is formulated as a constrained optimization, whose solution yields unique optimality conditions. Numerical results confirm the effectiveness of the proposed design. Notably, the framework extends naturally to nomographic functions, broader constellation families, and alternative noise models. 
\end{abstract}

\begin{keywords}
Over-the-air computation, heavy-tailed noise, optimal constellation
\end{keywords}

\vspace{-10pt}
\section{Introduction}

The upcoming 6G networks aim to enable edge intelligence for innovative applications such as augmented reality and the metaverse~\cite{he2025integrating}. As data volumes grow while devices remain resource-limited, computation tasks are increasingly offloaded to edge servers over wireless links. This makes the communication layer, and in particular aggregation protocols, critical to avoid performance bottlenecks~\cite{gunduz2021communicate,perez2025waveforms}.

Over-the-air computation (OAC) addresses this challenge by exploiting the superposition property of multiple-access channels (MACs). By employing simple precoding, concurrent transmissions are superimposed at the receiver to directly obtain the desired aggregate (e.g., sum or mean), thereby lowering both latency and energy consumption compared with conventional transmit-then-aggregate approaches~\cite{nazer2007Computation}. In turn, OAC narrows the communication–computation gap and facilitates applications such as federated learning, distributed inference, and wireless control~\cite{amiri2020federated,yilmaz2025private,park2021optimized}.

Despite its appeal, OAC is typically realized via analog amplitude modulation, which limits compatibility with existing wireless stacks and makes the aggregate highly sensitive to channel noise and fading~\cite{csahin2023survey}. Specifically, in practical MACs, electromagnetic and impulsive interference further induce non-Gaussian, heavy-tailed disturbances \cite{middleton2007statistical}: rare but large outliers dominate the analog superposition, causing severe distortion and bias in the computed function, and leading to unstable updates, e.g., gradient explosion, in federated edge learning\cite{chen2023edge}. Consequently, pure analog OAC becomes a reliability bottleneck, motivating the development of robust aggregation and modulation strategies to sustain performance under realistic MAC impairments.

A growing line of work moves edge aggregation from analog OAC to digital modulation to alleviate noise sensitivity. Early approaches adopt simple constellations (e.g., BPSK/FSK) and recover functions via symbol-type histograms under the type-based multiple-access channel~\cite{zhu2020one,csahin2023over,qiao2024massive,mergen2006type}. Building on this direction,~\cite{saeed2023ChannelComp} proposed a general digital-modulation framework for computing arbitrary finite functions. More recently, SumComp~\cite{Razavikia2024Ring} was introduced as a low-complexity scheme for sum computation, using a two-dimensional integer grid compatible with standard constellations such as multi-level and hexagonal QAM. Its reliability has been further enhanced by incorporating channel coding~\cite{liu2025digital,yan2025remac}.

Most existing studies adapt standard digital modulation schemes, e.g., PAM or QAM, leaving unresolved the fundamental challenge of constellation design tailored for OAC. While constellation optimality has been extensively studied in conventional communications~\cite{makowski2006optimality}, systematic insights for OAC remain scarce, as the objective shifts from message decoding to function estimation. To address this gap, we develop QAM-like constellations optimized to minimize the mean-squared error (MSE) of sum computation under an average-power constraint and heavy-tailed (Cauchy) noise. The complex Cauchy model captures impulsive outliers and reflects interference-limited operation~\cite{yang2021revisiting}. The resulting optimality conditions yield a coupled nonlinear system of equations, for which we characterize the optimality regions, and establish uniqueness in the large network. Finally, we validate our theoretical findings with numerical experiments. We note that the proposed framework naturally extends to the general class of \emph{nomographic} functions and alternative noise models.

\vspace{-15pt}
\section{System Model}
\label{sec:system}
\vspace{-7.5pt}

 We consider $K$ transmitters, and a single receiver, referred to as the computation point (CP).  Each transmitter node owns an integer value   $s_k\in \{0,1,\ldots, Q-1\}$, where $Q$ denotes the size of the alphabet. Then, all the nodes transmit simultaneously over a MAC to enable the CP to compute a desired function $f(s_1,\ldots,s_K)$.  Throughout the paper,   we consider $f$ to be the sum function, i.e., 
\vspace{-8pt}
\begin{align}
    f(s_1,\ldots,s_K) = \sum\nolimits_{k=1}^{K}s_k, 
\end{align}
which allows us to leverage waveform superposition on the MAC. 

\vspace{-5pt}
\subsection{Multiple Access Channel (MAC)}

The transmitter at node $k$ employs the encoder $\mathscr{E}_q(\cdot)$\footnote{Since the target function is symmetric with respect to its inputs, using an identical encoder across nodes suffices for computation~\cite{saeed2023ChannelComp}.} with parameter $q \in \mathbb{Z}^{+}$ to map its input $s_k$ to a channel symbol $x_k$, i.e., $x_k = \mathscr{E}_q(s_k)\in \mathbb{C}$. The channel symbols are drawn from a discrete square constellation of size $q \times q$, which requires $Q = q^2$ with $q \in \mathbb{Z}^{+}$. To satisfy the average-power constraint, the expected symbol energy must not exceed $P$, i.e., $\mathbb{E}[|x_k|^2] \leq P$. Under equiprobable signaling, this condition reduces to  $ \frac{1}{Q}\sum_{m=1}^{Q} |c_m|^2 \leq P$, where $\{c_1,\ldots,c_Q\}\subset \mathbb{C}$ denotes the constellation points.  

All nodes transmit their encoded symbols simultaneously over the shared channel\footnote{Residual synchronization errors at the receiver can be mitigated using phase-coded pilots~\cite{sahin2025feasibility}.}. After perfect channel inversion, the CP observes
\vspace{-15pt}
\begin{align}
\label{eq:fading}
    r = \sum\nolimits_{k=1}^{K} x_k + z,
\end{align}
where $r$ denotes the received signal and $z \in \mathbb{C}$ is additive noise. In contrast to the conventional Gaussian assumption, which may be optimistic in interference-limited scenarios~\cite{clavier2020experimental}, we model $z$ as a complex Cauchy random variable with scale parameter $\gamma > 0$, i.e., $z \sim \mathcal{C}(0,\gamma)$. Explicitly, $z = z_1 + i z_2$, where $z_1$ and $z_2$ are independent and identically distributed as $\text{Cauchy}(0,\gamma)$, thereby capturing impulsive heavy-tailed behavior. Since the Cauchy distribution lacks a finite second moment, it effectively models strong outliers. The CP then recovers the target function value via a decoding map $\mathscr{D}: \mathbb{C} \mapsto \mathcal{Y}_f$, yielding the estimate $\hat{f} := \mathscr{D}(r)$, where $\mathcal{Y}_f$ denotes the output alphabet of the desired function $f$.
  
\vspace{-4pt}
 
 \subsection{Decoding Procedure}

To recover the function estimate $\hat{f}$ from the received signal $r$, we first project $r \in \mathbb{C}$ onto the superimposed symbol grid, which forms a two-dimensional square constellation of size $N \times N$, where $N = (q-1)K+1$. Let $\mathcal{Y}$ denote the set of all induced constellation points with cardinality $|\mathcal{Y}| = N^2$. Then, the maximum-likelihood (ML) decoder is defined as
\vspace{-4pt}
\begin{align}
    \label{eq:ml_decoder}
    \mathscr{D}(r) = \underset{y_j \in \mathcal{Y}}{\arg\max}~ g(r|y_j),
\end{align}
\vspace{-2pt}
where $g(r|y_j)$ is the conditional channel transition probability. Since the noise is Cauchy distributed, we have
\vspace{-4pt}
\begin{align}
    g(r|y) = \frac{\gamma}{\pi \left(\gamma^2 + |r-y|^2\right)}, \quad \gamma > 0. 
\end{align}
Owing to the symmetry of the Cauchy distribution and the independence of the real and imaginary noise components, the two-dimensional ML decoder in \eqref{eq:ml_decoder} decouples into two one-dimensional estimators. Consequently, $\mathscr{D}(\cdot)$ reduces to independently rounding the real and imaginary parts of $r$ to the nearest grid points on the constellation.

\begin{figure}[t]
    \centering

\definecolor{eggshell}{rgb}{0.94, 0.92, 0.84}
\definecolor{flavescent}{rgb}{0.97, 0.91, 0.56}
\definecolor{lightapricot}{rgb}{0.99, 0.84, 0.69}
\definecolor{peach-orange}{rgb}{1.0, 0.8, 0.6}
\definecolor{peach-yellow}{rgb}{0.98, 0.87, 0.68}
\tikzset{every picture/.style={line width=0.75pt}} 
\scalebox{0.65}{
\begin{tikzpicture}[x=0.75pt,y=0.75pt,yscale=-1]

\draw[draw opacity=0][fill=pearl , rounded corners=25pt] (40pt, 200pt) rectangle (210pt, 10pt) {};

\draw [color={rgb, 255:red, 155; green, 155; blue, 155 } ] [dash pattern={on 4.5pt off 4.5pt}]  (60,145) -- (270,145) ;
\draw [color={rgb, 255:red, 155; green, 155; blue, 155 }] [dash pattern={on 4.5pt off 4.5pt}]  (170,260) -- (170,45) ;

\draw (170,30) node   {Gray Code};

\newcounter{ga}\setcounter{ga}{0}

\foreach \y in {145pt,100pt,55pt,10pt} {

           \foreach \x in {10pt,55pt,100pt,145pt} {
            
          \draw[fill=black!90] (50pt+\x,25pt+\y) node{}  circle  (3.5);
           

            \stepcounter{ga};
          }

}

\draw (80,185) node {\footnotesize $0001$};
\draw (80,245) node  {\footnotesize $0000$};
\draw (80,65) node   {\footnotesize $0010$};
\draw (80,125) node  {\footnotesize $0011$};

\draw (140,65) node  {\footnotesize $0110$};
\draw (140,125) node {\footnotesize $0111$};
\draw (140,185) node   {\footnotesize $0101$};

\draw (140,245) node  {\footnotesize $0100$};

\draw (200,125) node  {\footnotesize $1111$};
\draw (260,125) node {\footnotesize $1011$};
\draw (260,65) node   {\footnotesize $1010$};
\draw (200,65) node   {\footnotesize $1110$};

\draw (200,245) node   {\footnotesize $1100$};
\draw (260,245) node  {\footnotesize $1000$};
\draw (260,185) node   {\footnotesize $1001$};
\draw (200,185) node  {\footnotesize $1101$};

\draw[draw opacity=0][fill=pearl, rounded corners=25pt] (255pt, 218pt) rectangle (410pt, 10pt) {};

\draw [color={rgb, 255:red, 155; green, 155; blue, 155 } ] [dash pattern={on 4.5pt off 4.5pt}]  (340,155) -- (540,155) ;
\draw [color={rgb, 255:red, 155; green, 155; blue, 155 }] [dash pattern={on 4.5pt off 4.5pt}]  (440,250) -- (440,45) ;

\draw (445,30) node   {SumComp Code};

\draw [color={rgb, 255:red, 74; green, 144; blue, 226 }] [dash pattern={on 0.84pt off 2.51pt}]  (375,55) -- (515,120) ;

\draw [color={rgb, 255:red, 74; green, 144; blue, 226 } ] [dash pattern={on 0.84pt off 2.51pt}]  (375,120) -- (520,190) ;
\draw [color={rgb, 255:red, 74; green, 144; blue, 226 }] [dash pattern={on 0.84pt off 2.51pt}]  (375,190) -- (515,255) ;

\newcounter{ca}\setcounter{ca}{0}

\newcommand{\yTop}{65pt}   
\newcommand{\ySep}{50pt}    

\foreach \i in {0,1,2,3}{

  \pgfmathsetlengthmacro{\y}{\yTop - \i*\ySep}

  \draw[-latex,color={rgb,255:red,74; green,144; blue,226}]
    (375,\y+125pt) -- (530,\y+125pt);

  \foreach \x in {40pt,75pt,110pt,145pt}{
    \draw[fill=black!90] (240pt+\x,125pt+\y) node{} circle (3.5);
    \draw (240pt+\x,135pt+\y) node{\footnotesize $\theca$};
    \stepcounter{ca};
  }
}

 \draw[latex-latex, color={rgb, 255:red, 78; green, 43; blue, 20 }] (380,75) -- (415,75);
\draw (400,65) node [font=\footnotesize]  {\color{rgb, 255:red, 78; green, 43; blue, 20 }$d_1$};

\draw[latex-latex, color={rgb, 255:red, 78; green, 43; blue, 20 }] (360,60) -- (360,120);
\draw (350,90) node [font=\footnotesize]  {\color{rgb, 255:red, 78; green, 43; blue, 20 }$d_2$};

\begin{scope}[shift={(0cm,0.77cm)}]

\draw (395,255) node  [font=\footnotesize,color={rgb, 255:red, 74; green, 144; blue, 226 } ]  {$1$};
\draw (350,200) node [font=\footnotesize, color={rgb, 255:red, 74; green, 144; blue, 226 }]  {$4$};

\draw [-latex, color={rgb, 255:red, 74; green, 144; blue, 226 }]   (375,225) .. controls (355,205) and (360,180) .. (375,155) ;

\draw [-latex, color={rgb, 255:red, 74; green, 144; blue, 226 }]   (375,225) .. controls (380,245) and (405,250) .. (420,225) ;

\end{scope}

\end{tikzpicture}}

    \caption{Gray code vs SumComp code  for QAM $Q=16$ modulation. The right constellation diagram uses spacing parameters $(d_1^{*},d_2^{*})$ are determined by Theorem~\ref{TH:MAIN}. 
    }
    \label{fig:GrayCode}
\end{figure}
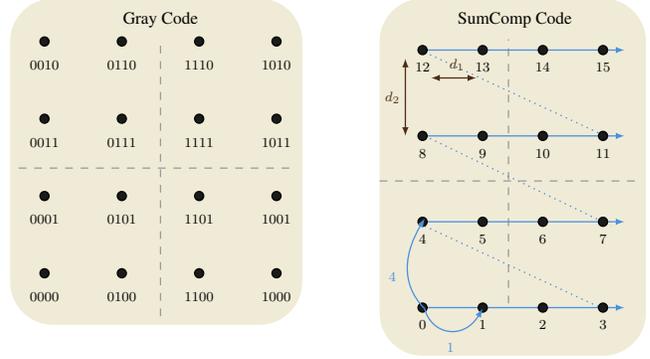

 \vspace{-5pt}

\subsection{Encoding Procedure}
 \vspace{-5pt}
Since the target function is the sum, the encoder $\mathscr{E}_q(s)$ must preserve the additive structure of the input symbols $s$ (forming an additive group). Accordingly, for any $s \in \{0,1,\ldots,Q-1\}$, we define $\mathscr{E}_q(\cdot)$ with parameter $q \in \mathbb{Z}^{+}$ as
\begin{align}
 \label{eq:encoding_qam}
\mathscr{E}_q(s) := \big(s - q \lfloor s/q \rfloor\big)d_1  
  + \lfloor s/q \rfloor d_2 {\rm i}, \quad d_1,d_2 \in \mathbb{R}^+,
\end{align}
where ${\rm i}$ is the imaginary unit, and $d_1, d_2$ specify the spacing along the in-phase and quadrature components, respectively.

\begin{remark}
The encoding rule in \eqref{eq:encoding_qam} can be viewed as a special case of the SumComp scheme~\cite{Razavikia2024Ring}, since it restricts the constellation to a square grid structure. At the same time, it introduces additional flexibility by parameterizing the in-phase and quadrature spacings through $(d_1,d_2)$. 
\end{remark}

\begin{remark}\label{rem:addative}
 For each symbol $s_k$ at node $k$, the pair $(s_k,\mathscr{E}_q(s_k))$ forms an additive group~\cite{Razavikia2024Ring}. Consequently, the sum $\sum_{k=1}^K s_k$ can be uniquely recovered from the encoded aggregate $\sum_{k=1}^K \mathscr{E}_q(s_k)$. By applying an isomorphic decoding map $\mathscr{D}$, the aggregated codeword can be mapped back to the desired computation $f = \sum_{k=1}^K s_k$.  
\end{remark}

The average symbol power of this two-dimensional grid is given by~\cite{goldsmith2005wireless} $\mathbb{E}_{s}\left[|\mathscr{E}_q(s)|^2\right] = {(Q-1)}(d_1^2 + d_2^2)/6$. An example of the resulting modulation diagram for $q=4$ is shown in Fig.~\ref{fig:GrayCode}.

\vspace{-5pt}

\subsection{Problem Statement}
\vspace{-5pt}
In \eqref{eq:encoding_qam}, the encoder $\mathscr{E}_q(s)$ maps each input symbol onto a two-dimensional QAM-like grid, whose geometry is determined by the spacings $(d_1,d_2)$ along the in-phase and quadrature axes.  Since the quadrature axis is scaled by the factor $q$,  symmetric noise in this direction causes proportionally larger computation errors compared to those along the in-phase axis. Thus, robustness requires a larger quadrature spacing, i.e., $d_2 > d_1$. The key design question is therefore:
\emph{By how much should $d_2$ exceed $d_1$ so that the overall computation error at the CP is minimized?} 
To answer this, we define the MSE of the recovered function $\hat f$ as $\mathcal{J}_q(d_1,d_2) = \mathbb{E}\big[|f - \hat f|^2 \big]$, and aim to find the optimal parameters $(d_1,d_2)$ that minimize this error subject to an average-power constraint, i.e., 
\begin{align}
\label{eq:MSEdefinition}
\min_{d_1, d_2} \quad \mathcal{J}_q(d_1,d_2)
\quad \text{s.t.} \quad {(Q-1)} (d_1^2+d_2^2)/6 = P.
\end{align}
where $P$ is the available power budget.

\begin{remark}
 Unlike Gaussian noise, the Cauchy distribution exhibits heavy tails and lacks a finite variance, which may suggest that MSE-based design is inapplicable. Nevertheless, since symbols are digitally represented and hard decoded, the effective function estimation error remains finite~\cite{chen2023quantizing} 
\end{remark}

\vspace{-7pt}

\section{Optimal Constellation Design}
\vspace{-6pt}

This section aims to derive the optimal in-phase and quadrature spacings $(d_1,d_2)$ of the proposed QAM-like constellation that minimize computation error under heavy-tailed Cauchy noise. The analysis proceeds in three steps: (i) we derive a closed-form expression for $\mathcal{J}_q(d_1,d_2)$ that depends solely on $(d_1,d_2)$, thereby reducing the constellation design task to a two-parameter optimization; (ii) we formulate the Lagrangian of the constrained problem and establish the Karush–Kuhn–Tucker (KKT) conditions, which characterize all stationary points of the Lagrangian; and (iii) we show that, for sufficiently large number of transmitter $K$, the KKT system admits a unique feasible solution, which corresponds to the global minimizer of $\mathcal{J}_q(d_1,d_2)$. Lemma~\ref{LEM_PROOF} presents $\mathcal{J}_q(d_1,d_2)$ in terms of $d_1$ and $d_2$. 

\begin{lem}\label{LEM_PROOF}
For a $K$-user MAC with encoder $\mathscr{E}_q(\cdot)$ in \eqref{eq:encoding_qam}, ML decoder $\mathscr{D}$ in \eqref{eq:ml_decoder}, and Cauchy noise $z \sim \mathcal{C}(0,\gamma)$, assume the induced constellation points of $\sum_k s_k$ are uniformly distributed over $\mathcal{Y}$. Then, the MSE is
\begin{align}
\label{eq:MSEd1q2d2}
\mathcal{J}q(d_1,d_2) = \mu(d_1) + q^2 \mu(d_2),
\end{align}
where $\mu(x) = \frac{2}{\pi} \sum\nolimits_{m=1}^{N-1}
\alpha_m \arctan\Big(\frac{\gamma}{(2m-1)x}\Big)$, with $\alpha_m = 2m-1 + \tfrac{3m(1-m)-1}{N}$ and $N = K(q-1)+1$.
\end{lem}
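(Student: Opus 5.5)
The plan is to exploit the separable structure of the superimposed grid. I reduce the two-dimensional MSE to two independent one-dimensional error computations and evaluate each with a tail-sum identity for the squared integer error.

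First I would write the aggregate in coordinates. By \eqref{eq:encoding_qam}, $\sum_k \mathscr{E}_q(s_k) = a\,d_1 + b\,d_2{\rm i}$ with $a=\sum_k (s_k-q\lfloor s_k/q\rfloor)$ and $b=\sum_k\lfloor s_k/q\rfloor$. Both lie in $\{0,\ldots,N-1\}$ with $N=K(q-1)+1$, and $f=a+qb$ (Remark~\ref{rem:addative}). The ML decoder rounds the real and imaginary parts separately, so $\hat f=\hat a+q\hat b$, and hence
\begin{align*}
\mathcal{J}_q=\mathbb{E}[(a-\hat a)^2]+q^2\,\mathbb{E}[(b-\hat b)^2]+2q\,\mathbb{E}[(a-\hat a)(b-\hat b)].
\end{align*}
Under the uniformity assumption, $(a,b)$ are independent and uniform on $\{0,\ldots,N-1\}$, and $z_1,z_2$ are independent. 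The cross term therefore factors into $\mathbb{E}[a-\hat a]\,\mathbb{E}[b-\hat b]$. Each factor vanishes by the reflection $a\mapsto N-1-a$, $z_1\mapsto -z_1$, which maps the clipped rounding error to its negative and preserves the law. This yields the form $\mu(d_1)+q^2\mu(d_2)$, where $\mu(d)$ is the one-dimensional MSE for spacing $d$.

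Next I would compute $\mu(d)$. Let $e=\hat a-a$. I would use the identity $e^2=\sum_{m=1}^{|e|}(2m-1)$, which gives $\mathbb{E}[e^2]=\sum_{m=1}^{N-1}(2m-1)\,\Pr(|e|\ge m)$. Consider the rounding to the nearest grid point, clipped at the endpoints. For a given $a$, the event $\{e\ge m\}$ is possible only if $a+m\le N-1$, and then it equals $\{z_1>(m-\tfrac12)d\}$, since all overshoot beyond the top point collapses onto it. The event $\{e\le -m\}$ behaves symmetrically. The Cauchy tail is $\Pr(z_1>t)=\tfrac{1}{\pi}\arctan(\gamma/t)$. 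Averaging over the $N-m$ admissible values of $a$ on each side then gives a weight of order $\tfrac{2}{\pi}(2m-1)\tfrac{N-m}{N}\arctan(\cdot)$ for the $m$-th term.

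The main obstacle is matching this to the exact coefficients in the statement: $\alpha_m=2m-1+\tfrac{3m(1-m)-1}{N}$ and the argument $\gamma/((2m-1)x)$. My naive count produces the weight $(2m-1)(N-m)/N$ and the argument $2\gamma/((2m-1)d)$. So I would recheck two things carefully. The first is the decision-threshold convention, i.e.\ whether the spacing $x$ is measured as the half-distance, which would absorb the factor $2$. The second is the exact bookkeeping of boundary points: the endpoints have one-sided decision regions, and the counts of $a$ for which each tail event contributes must be right. I would redo the sum by expanding $\mathbb{E}[e^2]=\tfrac1N\sum_a\sum_j (j-a)^2\Pr(\hat a=j\mid a)$ directly, writing each interval probability as a difference of arctangents, and telescoping. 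This regroups the sum into coefficients of $\arctan(\gamma/((2m-1)x))$, and I would compare the resulting coefficients with $\alpha_m$ term by term.
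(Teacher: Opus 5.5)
Your reduction to two one-dimensional problems is sound. It is actually tighter than the paper's, which asserts the split from the independence of $z_1,z_2$ without discussing the cross term; your reflection argument ($\mathbb{E}[a-\hat a]=0$), together with the independence of $(a,z_1)$ and $(b,z_2)$, settles that point.

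The gap is the final step, and it cannot be closed the way you propose, because your tail-sum count is correct. Take points at integer multiples of $d$ (as in the encoder), clipped nearest-point rounding, and $a$ uniform on $\{0,\dots,N-1\}$, and write $T_m:=\Pr\big(z_1>(m-\tfrac12)d\big)$. The exact one-axis MSE is then
\begin{align*}
\mu(d)=\frac{2}{N}\sum_{m=1}^{N-1}(2m-1)(N-m)\,T_m
=\frac{2}{\pi}\sum_{m=1}^{N-1}\frac{(2m-1)(N-m)}{N}\arctan\Big(\frac{2\gamma}{(2m-1)d}\Big).
\end{align*}
The transition-by-transition expansion you plan to telescope returns exactly this. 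Since $(2m-1)(N-m)/N=\alpha_m+(m-1)^2/N$, it never yields $\alpha_m$. For $N=3$ one checks directly that $\mathbb{E}[e^2]=\tfrac43T_1+2T_2$, whereas the lemma's coefficients give $\tfrac43(T_1+T_2)$.

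The paper does not derive these coefficients. It imports them from \cite[Eq.~42]{Razavikia2024Ring}, swapping the Gaussian $Q$-function for the Cauchy tail. Summation by parts shows what that formula encodes: $N\alpha_m=m^2(N-m)-(m-1)^2(N-m+1)$. This is exactly what you obtain by setting $\Pr(|e|=k)=\tfrac{2(N-k)}{N}(T_k-T_{k+1})$ for $k\le N-2$ and $\Pr(|e|=N-1)=\tfrac2N T_{N-1}$. In other words, it treats every admissible starting point as interior and discards the overshoot that the clipped decoder assigns to the edge point. The neglected contribution to $\mathbb{E}[e^2]$ is exactly $\tfrac2N\sum_{m}(m-1)^2T_m$.

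Separately, the argument $\gamma/((2m-1)x)$ means thresholds at $(2m-1)x$. So $x$ there is half the minimum distance, not the spacing $d_1,d_2$ of the encoder, and your suspicion about the factor $2$ is correct; your suspicion about the boundary bookkeeping is not.

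Hence the lemma as written follows only after adopting both conventions explicitly: the half-distance parametrization and the edge-neglecting error probabilities. Under the model as literally defined, what you can actually prove is the formula displayed above. You should present that rather than try to force agreement with $\alpha_m$.
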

\begin{proof}
 See Appendix~\ref{ap:lem_proof}. 
\end{proof}
\begin{figure}[!t]
\centering
\subfigure[$K=10$]{
    \label{fig:SUMQAM}
    \begin{tikzpicture} 
    \begin{axis}[
        xlabel={$\gamma^{-1}$ (dB)},
        ylabel={${\rm MSE}(\hat{f})$},
        label style={font=\scriptsize},
        tick label style={font=\scriptsize} , 
        width=0.48\textwidth,
        height=3.9cm,
        xmin=0, xmax=20,
        ymin=1e-1, ymax=8000,
        minor tick num=5,
         ymode = log,
      legend style={nodes={scale=0.65, transform shape}, at={(0.45,0.65)}}, 
        ymajorgrids=true,
        xmajorgrids=true,
        grid style=dashed,
        grid=both,
        grid style={line width=.1pt, draw=gray!15},
        major grid style={line width=.2pt,draw=gray!40},
    ]
     \addplot[ color=darkcerulean,
        line width=1pt,
        each nth point={10}
        ]        
    table[x=xi_dB,y=mse_opt]
    {Data/Computation_exp/Cauchy_MSE_q4_K10.dat};
     \addplot[ color=darkcerulean,
        line width=1pt,
        each nth point={10}, 
        dashed
        ]        
    table[x=xi_dB,y=mse_eq]
    {Data/Computation_exp/Cauchy_MSE_q4_K10.dat};
     \addplot[color=bulgarianrose,
     each nth point={10}, 
        line width=1pt]
    table[x=xi_dB,y=mse_opt]
    {Data/Computation_exp/Cauchy_MSE_q8_K10.dat};
     \addplot[color=bulgarianrose, dashed,
     each nth point={10}, 
        line width=1pt]
    table[x=xi_dB,y=mse_eq]
    {Data/Computation_exp/Cauchy_MSE_q8_K10.dat};
    \legend{{$(q=4)$-$({d}_1^*,{d}_2^*)$},{$(q=4)$-$({d}_1={d}_2)$},{$(q=8)$-$({d}_1^*,{d}_2^*)$} ,{$(q=8)$-$({d}_1={d}_2)$},};
    \end{axis}
\end{tikzpicture}
}
\vspace{-5pt}
\subfigure[$K=100$ ]{
    \label{fig:SUMPAM}
    \begin{tikzpicture} 
    \begin{axis}[
        xlabel={$\gamma^{-1}$ (dB)},
        ylabel={${\rm MSE}(\hat{f})$},
        label style={font=\scriptsize},
        tick label style={font=\scriptsize} , 
        width=0.48\textwidth,
        height=3.9cm,
        xmin=0, xmax=20,
        ymin=5e-1, ymax=10000,
        minor tick num=5,
         ymode = log,
      legend style={nodes={scale=0.65, transform shape}, at={(0.45,0.65)}}, 
        ymajorgrids=true,
        xmajorgrids=true,
        grid style=dashed,
        grid=both,
        grid style={line width=.1pt, draw=gray!15},
        major grid style={line width=.2pt,draw=gray!40},
    ]
     \addplot[ color=darkcerulean,
        line width=1pt,
        each nth point={10}
        ]        
    table[x=xi_dB,y=mse_opt]
    {Data/Computation_exp/Cauchy_MSE_q4_K100.dat};
     \addplot[ color=darkcerulean,
        line width=1pt,
        each nth point={10}, 
        dashed
        ]        
    table[x=xi_dB,y=mse_eq]
    {Data/Computation_exp/Cauchy_MSE_q4_K100.dat};
    \addplot[color=darklavender, each nth point={10}, 
        line width=1pt]
    table[x=xi_dB,y=mse_opt]
    {Data/Computation_exp/Cauchy_MSE_q8_K100.dat};
    \addplot[color=darklavender,dashed,
        line width=1pt,
        each nth point={10}]
    table[x=xi_dB,y=mse_eq]
    {Data/Computation_exp/Cauchy_MSE_q8_K100.dat};
    \legend{{$(q=4)$-$({d}_1^*,{d}_2^*)$},{$(q=4)$-$({d}_1={d}_2)$} ,{$(q=8)$-$({d}_1^*,{d}_2^*)$},{$(q=8)$-$({d}_1={d}_2)$}};
    \end{axis}
\end{tikzpicture}}
  \caption{Monte Carlo evaluation of the MSE for the sum function with \ref{fig:SUMQAM} $K=10$ and $K=100$ transmitter over $5\times10^4$ independent trials: Solid curves denote the optimized distance parameters $({d}_1^*,{d}_2^*)$ obtained by, whereas dashed curves correspond to equal‐distance ${d}_1={d}_2=\sqrt{6/(Q-1)}$.
  }
  \label{fig:SumSim}
\end{figure}
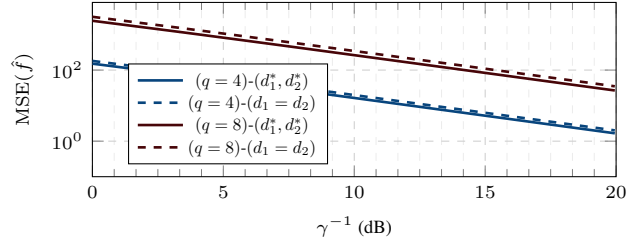
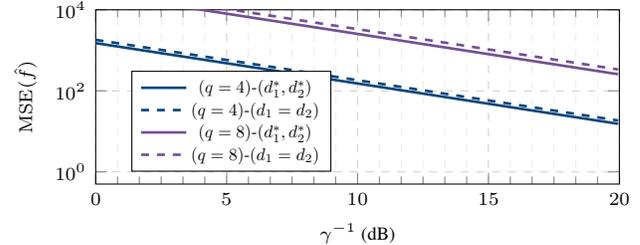

Given the closed-form representation in \eqref{eq:MSEd1q2d2}, the constellation design problem in \eqref{eq:MSEdefinition} reduces to
\begin{align}
\label{eq:msed1d2}
           \min_{d_1, d_2} \; \mu(d_1) + q^2 \mu(d_2),
           \quad \text{s.t.} \quad d_1^2+d_2^2 = \rho^2,
\end{align}
where $\rho := \sqrt{6P/(Q-1)}$. To solve \eqref{eq:msed1d2}, we form the Lagrangian
\begin{align}
\label{eq:Lagrangian_function}
 \mathcal{L}(d_1, d_2, \lambda) 
 = \mu(d_1) + q^2 \mu(d_2) + \lambda ( d_1^2 + d_2^2 - \rho^2 ),
\end{align}
where $\lambda$ is the Lagrange multiplier enforcing the power constraint. Applying the KKT conditions to \eqref{eq:Lagrangian_function}, and using the bordered Hessian theorem~\cite{d2001bordered}, yields the following result.

\begin{theorem}\label{TH:MAIN}
The optimal constellation parameters of \eqref{eq:MSEdefinition} are
\begin{align}
    d_1^\ast =  \gamma \rho \sqrt{0.5 - t^\ast}, 
    \qquad  
    d_2^\ast = \gamma \rho \sqrt{0.5 + t^\ast},
\end{align}
where $t^\ast$ is the unique single positive  root of
\begin{align}
    \mathcal{G}_{Q,\gamma}^N(t) &=
\sum\nolimits_{m=1}^{N-1}
\frac{\theta_m^2}{\sqrt{0.5-t}\big(1+\theta_m\rho^2(0.5-t)\big)}  \nonumber \\
&\quad -
\sum\nolimits_{m=1}^{N-1}
\frac{Q \theta_m^2}{\sqrt{0.5+t}\big(1+\theta_m\rho^2(0.5+t)\big)},
\end{align}
with $\theta_m=2m-1$ for $m\in [M]$, $\mathcal{G}_{Q,\gamma}^N(t^\ast)=0$ for $K \gg 1$.
\end{theorem}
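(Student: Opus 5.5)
We reduce the two-variable constrained problem \eqref{eq:msed1d2} to a one-dimensional root-finding problem by parametrizing the power circle. The parametrization $d_1^2=\rho^2(0.5-t)$, $d_2^2=\rho^2(0.5+t)$ with $t\in(-0.5,0.5)$ satisfies $d_1^2+d_2^2=\rho^2$ automatically. Up to the normalization of $\gamma$ that the statement absorbs into $\rho$ (rescale $d_i\mapsto d_i/\gamma$, since $\mu$ depends only on $\gamma/x$), this gives the stated form of $(d_1^\ast,d_2^\ast)$.

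\textbf{Stationarity condition.} Using Lemma~\ref{LEM_PROOF}, we differentiate term by term:
\begin{align}
\mu'(x) = -\frac{2}{\pi}\sum_{m=1}^{N-1}\alpha_m\frac{\gamma\theta_m}{\theta_m^2x^2+\gamma^2},\qquad \theta_m=2m-1 .
\end{align}
The KKT conditions for \eqref{eq:Lagrangian_function} are $\mu'(d_1)+2\lambda d_1=0$ and $q^2\mu'(d_2)+2\lambda d_2=0$. Eliminating $\lambda$ gives
\begin{align}
\frac{\mu'(d_1)}{d_1}=q^2\frac{\mu'(d_2)}{d_2}.
\end{align}
Substituting the parametrization and writing $Q=q^2$ turns this into $\mathcal{G}(t)=0$. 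To obtain the paper's form, with weights $\theta_m^2$ and denominators $1+\theta_m\rho^2(0.5\mp t)$, we use the large-$K$ regime. There $\alpha_m=\theta_m+(3m(1-m)-1)/N$, and the correction is what makes the weights approximately $\theta_m$-proportional after normalization. I would state this reduction explicitly as an asymptotic approximation valid for $K\gg 1$, matching the theorem's hypothesis.

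\textbf{Existence and uniqueness of a positive root.} The first sum in $\mathcal{G}$ is increasing in $t$ and blows up as $t\to 0.5^-$, because of the factor $(0.5-t)^{-1/2}$. The second sum is decreasing in $t$, so $-(\text{second sum})$ is increasing. Each summand has the form $h(u)=\theta_m^2/\big(\sqrt{u}(1+\theta_m\rho^2u)\big)$, which is strictly decreasing in $u$. Hence $\mathcal{G}$ is strictly increasing on $(-0.5,0.5)$. It tends to $+\infty$ as $t\to0.5$, and $\mathcal{G}(0)=(1-Q)\sum_m h_m(0.5)<0$ because $Q>1$. By the intermediate value theorem and strict monotonicity, there is exactly one root, and it lies in $(0,0.5)$. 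This gives $t^\ast>0$, i.e.\ $d_2^\ast>d_1^\ast$, consistent with the intuition stated in the problem statement.

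\textbf{Global optimality.} Since the stationary point is unique on the constraint manifold, I would verify it is a minimum using the bordered Hessian~\cite{d2001bordered}. Equivalently, I would differentiate $\phi(t)=\mu(d_1(t))+q^2\mu(d_2(t))$. Its derivative is a positive multiple of $\mathcal{G}(t)$: the factor $\rho^2/2$ times the chain-rule terms $d d_i/dt=\pm\rho^2/(2d_i)$ reproduces the $(0.5\mp t)^{-1/2}$ factors. The sign of $\phi'$ therefore changes from negative to positive at $t^\ast$, so $t^\ast$ is the global minimizer on the compact feasible arc. The endpoints $t=\pm0.5$ give $d_i=0$ and are excluded by the blow-up.

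The main obstacle is the second step: the exact $\alpha_m$ weights do not literally produce the stated $\theta_m^2$ form. Justifying that approximation, and the monotonicity of each summand under the exact weights, requires the $K\gg1$ assumption. I would first try to show that $\alpha_m>0$ for all $m\le N-1$, which would make monotonicity hold even before approximating.
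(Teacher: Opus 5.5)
Your proof has the same skeleton as the paper's, but your last two steps are tighter than the paper's. The shared steps are: KKT, elimination of $\lambda$ to get $\mu'(d_1)/d_1=q^2\mu'(d_2)/d_2$, the large-$K$ replacement $\alpha_m\theta_m\approx\theta_m^2$, the $t$-parametrization of the power circle, and strict monotonicity of $\mathcal{G}$.

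On existence and positivity of the root, the paper invokes $\lim_{t\to+\infty}\mathcal{G}(t)=+\infty$, which lies outside the domain $(-0.5,0.5)$. It never exhibits a sign change, so it never proves that the root is positive. Your computation $\mathcal{G}(0)=(1-Q)\sum_m h_m(0.5)<0$, together with $\mathcal{G}(t)\to+\infty$ as $t\to 0.5^-$, shows exactly that $t^\ast\in(0,0.5)$. Strict monotonicity then makes it unique on all of $(-0.5,0.5)$.

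On optimality, the paper uses a bordered-Hessian local test plus uniqueness of the stationary point on $(0,0.5)$, and it does not justify restricting to $(0,0.5)$. You instead note that $\phi'(t)$ is a positive multiple of the stationarity function, so $\phi$ decreases and then increases along the whole arc. In one dimension this carries the same information, but yours is direct and covers negative $t$ and the endpoints at once. Note that $\phi$ is finite at $t=\pm 0.5$, because $\arctan\to\pi/2$. It is the sign of $\phi'$, not a blow-up of $\phi$, that rules the endpoints out.

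Two points need correcting.

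\textbf{The denominators.} The large-$K$ approximation only changes the weights, $\alpha_m\theta_m\to\theta_m^2$. It cannot turn $(\theta_m d_1)^2=\theta_m^2\rho^2(0.5-t)$ into $\theta_m\rho^2(0.5-t)$. After your $\gamma$-normalization, your own stationarity condition gives denominators $1+\theta_m^2\rho^2(0.5\mp t)$. The unsquared $\theta_m$ in the statement is a typo, and it also appears in the paper's appendix. Your monotonicity argument works either way. However, your identity $\phi'\propto\mathcal{G}$ holds only with $\theta_m^2$, so state $\mathcal{G}$ in that form.

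\textbf{The fallback $\alpha_m>0$.} This cannot work. $\alpha_m<0$ whenever $(2m-1)N<3m(m-1)+1$, roughly for $m>2N/3$. For example, $\alpha_4=-2/5$ when $N=5$, and in general $\alpha_{N-1}=-N+6-7/N$. The paper itself notes this, and that the exact KKT system may then have several solutions. Relatedly, $\alpha_m\approx\theta_m$ comes from the $O(m^2/N)$ correction vanishing for fixed $m$, not from the correction itself, and this is not uniform in $m$.

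So what both your proposal and the paper actually establish is optimality for the surrogate objective in which $\alpha_m$ is replaced by $2m-1$. For that surrogate your chain is complete and rigorous: positive weights, hence monotone $\mathcal{G}$, hence a single sign change of $\phi'$. The exact problem is covered by neither argument.
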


\begin{proof}
    See Appendix~\ref{ap:proof}.
\end{proof}


\section{Numerical Results}
\vspace{-5pt}
Here, we validate the proposed constellation design and illustrate its performance gains over standard QAM schemes under heavy-tailed noise. We evaluate the MSE of the corrupted function with respect to the SNR $P/\gamma$ ($P=1)$, and compare with one archived by the symmetric constellation, i.e.,  $d_1=d_2=\sqrt{6/(Q-1)}$ ($P=1$) employed by the SumComp~\cite{Razavikia2024Ring}. Fig.~\ref{fig:SumSim} plots the resulting MSE as a function of $1/\gamma \in \{0,\dots,20\}\,\mathrm{dB}$ in two scenarios: 
1) $K=100$ nodes with $q \in \{4,8\}$ to illustrate the effect of modulation order; 2) $K=10$ nodes with $q \in \{4,8\}$ to highlight the impact of constellation size.  
Across a broad SNR range, the optimized design consistently yields lower MSE than the QAM-style grid. and the performance gap widens as either $K$ or $q$ increases. As $\gamma$ decreases, all curves converge since the Cauchy distortion becomes negligible and both designs coincide. Overall, the optimized constellation achieves an improvement of approximately $4$ to $5$ dB in MSE across a wide SNR range. Finally, when $q=4$, the optimized design delivers the best performance due to its balanced parameterization.

\vspace{-5pt}
\section{Conclusions}
\label{sec:conclusions}
\vspace{-5pt}
We studied digital OAC over a MAC with QAM constellation family under heavy-tailed Cauchy noise. The constellation design was cast as an optimization problem aiming to minimize the MSE of sum aggregation subject to a power constraint. By analyzing the KKT conditions, we showed that the optimal constellation parameters correspond to the unique root of a nonlinear equation. The resulting design is inherently asymmetric, with $d_2^{*} > d_1^{*}$, thereby quantifying the imbalance between in-phase and quadrature spacings. Numerical results demonstrated up to $4$ dB improvement in MSE compared to conventional SumComp constellations. The proposed framework can be readily extended to nomographic functions, providing a robust and efficient solution for OAC in practical wireless networks.

\vspace{-6pt}
\appendix

\section{Proof of Lemma~\ref{LEM_PROOF}}\label{ap:lem_proof}
\vspace{-5pt}
The proof follows similar arguments as in \cite[Appendix B]{Razavikia2024Ring}. Let $z_1$ and $z_2$ denote the real and imaginary components of the channel noise $z$, i.e., $z = z_1 + z_2 i$. Since $z_1$ and $z_2$ are independent, the MSE decomposes as
\begin{align} \mathcal{J}_q(d_1,d_2) = \mathbb{E}\Big[\|{\mathscr{D}(z_1)}|^2\Big] + q^2 \mathbb{E}\Big[\,|{\mathscr{D}(z_2)}|^2 \Big]. \end{align}
where the first and second terms denote the effective decision errors along the in-phase and quadrature axes, respectively. Also, $|\mathscr{D}(z_1)|$ and $|\mathscr{D}(z_2)|$ are integer-valued random variables representing symbol detection errors in each axis. The expectations $\mu(d_1)$ and $\mu(d_2)$ correspond to the average squared error of an $N$-ary PAM constellation, where $N = K(q-1) + 1$ is the number of superimposed constellation points. Their closed-form expressions can be obtained analogously to \cite[Eq. 42]{Razavikia2024Ring}, by replacing the Gaussian $Q$-function with the tail distribution of the Cauchy density. This yields the expression stated in Lemma~\ref{LEM_PROOF}

\vspace{-5pt}
\section{Proof Sketch of Theorem~\ref{TH:MAIN}} \label{ap:proof}
\vspace{-5pt}

By applying the KKT conditions to the Lagrangian in \eqref{eq:Lagrangian_function}, and after some algebraic manipulations, we obtain
\vspace{-5pt}
\begin{subequations}
\label{eq:KKT_system}
    \begin{align}
 \sum_{m=1}^{N-1} \frac{\gamma_{m}}{d_1\big(1+(\theta_{m}d_1)^2\big)} 
 &= \sum_{m=1}^{N-1} \frac{q^2\gamma_{m}}{d_2\big(1+(\theta_{m}d_2)^2\big)}, \label{eq:KKT_eq1}\\
 d_1^2 + d_2^2 &= \rho^2, \label{eq:KKT_eq2}
    \end{align}    
\end{subequations}
where $\gamma_{m} = \alpha_m(2m-1)$ and $\theta_m = (2m-1)/\gamma$ for all $m \in [N]$.  
The solutions to \eqref{eq:KKT_system} correspond to the stationary points of the Lagrangian $\mathcal{L}(d_1,d_2,\lambda)$. Since $\gamma_m$ may take negative values for $m > \lceil 2N/3 \rceil$, the system may in general admit multiple solutions.

For a large number of nodes $K \gg 1$, however, we can approximate $\gamma_m \approx (2m-1)^2 > 0$. In this case, the system simplifies to
\vspace{-5pt}
\begin{subequations}
\label{eq:KKT_theta}
    \begin{align}
 \sum_{m=1}^{N-1} \frac{\theta_m^2}{d_1\big(1+(\theta_m d_1)^2\big)} 
 &= \sum_{m=1}^{N-1} \frac{q^2\theta_m^2}{d_2\big(1+(\theta_m d_2)^2\big)}, \label{eq:KKT_theta1}\\
 d_1^2 + d_2^2 &= \rho^2. \label{eq:KKT_theta2}
    \end{align}    
\end{subequations}
To eliminate the constraint \eqref{eq:KKT_theta2}, we define an auxiliary variable $t$ such that $d_1 = \rho \sqrt{0.5 - t}$ and $ d_2 = \rho \sqrt{0.5 + t}$ with $t\in (-0.5, 0.5)$. Substituting $t$ into \eqref{eq:KKT_theta1}, we obtain 
\begin{align}
\nonumber
     \mathcal{G}_{Q,\gamma}^{N}(t) &:= 
\sum\nolimits_{m=1}^{N-1}  
\frac{\theta_m^2}{\sqrt{0.5-t}\big(1+\theta_m\rho^2(0.5-t)\big)} \\
&\quad - \sum\nolimits_{m=1}^{N-1} 
\frac{q^2\theta_m^2}{\sqrt{0.5+t}\big(1+\theta_m\rho^2(0.5+t)\big)}. \label{eq:G_function}
\end{align}
Thus, the optimal solution corresponds to the unique root $t^\ast$ of $\mathcal{G}_{Q,\gamma}^{N}(t)$, yielding
\[
d_1^\ast = \rho \sqrt{0.5 - t^\ast}, \qquad d_2^\ast = \rho \sqrt{0.5 + t^\ast}.
\]

It remains to show that $t^\ast$ is unique. Differentiating \eqref{eq:G_function} with respect to $t$ gives
\begin{align}
  \nonumber
   \frac{\partial \mathcal{G}_{Q,\gamma}^{N}(t)}{\partial t} 
&= \sum_{m=1}^{N-1} 
\frac{\theta_m^2}{2(0.5 - t)^{3/2}}
\cdot \frac{1+3\theta_m \rho^2(0.5 - t)}
{\big(1+\theta_m \rho^2(0.5 - t)\big)^2} \\
&\quad + \sum_{m=1}^{N-1} 
\frac{q^2\theta_m^2}{2(0.5+t)^{3/2}}
\cdot \frac{1+3\theta_m \rho^2(0.5+t)}
{\big(1+\theta_m \rho^2(0.5+t)\big)^2}. \nonumber
\end{align} 
 The derivative is strictly positive  for $t\in (0,0.5)$, implying that $\mathcal{G}_{Q,\gamma}^{N}(t)$ is strictly increasing and hence injective. By the intermediate value theorem, and using the fact that $\lim_{t \to +\infty} \mathcal{G}_{Q,\gamma}^{N}(t) = +\infty$ while $\mathcal{G}_{Q,\gamma}^{N}(t)$ is continuous, the function is also surjective. Moreover, by the bordered Hessian (second–order sufficient) test~\cite{d2001bordered}, strictly positive implies that any feasible solution $t^\ast$ of $\mathcal{G}_{Q,\gamma}^{N}(t)=0$ is therefore a \emph{strict local minimizer}. Therefore, because the feasible set is the convex interval $(0,0.5)$ and the stationary point is unique, this local minimizer is in fact the global minimizer.

\bibliographystyle{ieeetr}
\bibliography{IEEEabrv,Ref2}


\end{document}